\newtheorem{theorem}{Theorem}
\newtheorem{lem}{Lemma}
\newtheorem{problem}{Problem}
\newtheorem{defn}{Definition}
\newtheorem{rem}{Remark}
\newtheorem{cor}{Corollary}
\def\BibTeX{{\rm B\kern-.05em{\sc i\kern-.025em b}\kern-.08em
	T\kern-.1667em\lower.7ex\hbox{E}\kern-.125emX}}
\renewcommand{\Re}{\mathbb{R}}
\newcommand{\panos}[1]{\textcolor{red}{#1}}
\newcommand{\josh}[1]{\textcolor{orange}{#1}}
\begin{document}

\title{Data-Driven Covariance Steering Control Design\\}
\author{%
	Joshua Pilipovsky\textsuperscript{*} \thanks{\textsuperscript{*} J. Pilipovsky is a PhD student at the School of Aerospace Engineering, Georgia Institute of Technology, Atlanta, GA 30332-0150, USA. Email: jpilipovsky3@gatech.edu} 
	~~~~
	~~~~
	Panagiotis Tsiotras\textsuperscript{$\dagger$} \thanks{\textsuperscript{$\dagger$} P. Tsiotras is the David \& Lewis Chair and Professor at the School of Aerospace Engineering and the Institute for Robotics \& Intelligent Machines, Georgia Institute of Technology, Atlanta, GA 30332-0150, USA. Email: tsiotras@gatech.edu}
}%
\maketitle

\begin{abstract}
	This paper studies the problem of steering the distribution of a linear time-invariant system from an initial normal distribution to a terminal normal distribution under \textit{no} knowledge of the system dynamics.
	This data-driven control framework uses data collected from the input and the state and utilizes the seminal work by Willems et al. to construct a data-based parametrization of the mean and the covariance control problems.
	These problems are then solved to optimality as convex programs using standard techniques from the covariance control literature.
	We also discuss the equivalence of indirect and direct data-driven covariance steering designs, as well as a regularized version of the problem that provides a balance between the two. 
	We illustrate the proposed framework through a set of randomized trials on a double integrator system and show that the results match up almost exactly with the corresponding model-based method in the noiseless case.
	We then analyze the robustness properties of the data-free and data-driven covariance steering methods and demonstrate the trade-offs between performance and optimality among these methods in the presence of data corrupted with exogenous noise.
\end{abstract} 

\section{Introduction}
Recently, there has been an emergence of an increasing reliance on data-driven methods for solving complex problems in science and engineering. 
The field of artificial intelligence has demonstrated the ability to solve extremely difficult problems using input and output data using the machinery of neural networks and learning-based algorithms \cite{speechNN, imageClassNN, droneAcrobatics}.
One of the major fundamental flaws, however, of purely learning-based solutions is their lack of verifiability, that is, verifying that the networks will perform as expected given an input data stream.
Many works have begun looking at robustness properties to verify these neural networks both in the deterministic \cite{SMT1, exactVerification_fazlyab, VERIF_PAVONE, RPM_vincent, MILPNN1} and the probabilistic settings \cite{pilipovsky2022probabilistic, probVerification_fazlyab}.
The problem of analyzing the stability and robustness of a general learning-based solution is still intractable, however.

In the context of control theory, it is also often the case that we do not have prior knowledge of the system dynamics.
Ideally, we would like to use the data collected to perform control designs that have guaranteed performance and robustness properties, albeit in a learning/data-driven context.
To this end, another paradigm to perform control design is to use the input and output data streams to either estimate the model of the system, or directly perform controller synthesis using these data.
The former technique is referred to as an \textit{indirect} design, as it performs system identification (sysID) first, followed by controller synthesis, while the latter is reffered to as a \textit{direct} design, as it bypasses the sysID phase completely and directly generates control commands from input and output data.
Furthermore, each method may be classified into either a \textit{certainty-equivalence} (CE) or \textit{robust} approach, depending on whether uncertainties are taken into account.

The indirect or sysID approach has been long studied in the general setting \cite{Keesman2011, Ljung1998} with methods such as subspace identification with broader applications to filtering and state estimation \cite{Verhaegen2007}.
For optimization-based control, such as the linear quadratic regular (LQR), there are many works that use the indirect approach \cite{indirectLQR1, indirectLQR2, indirectLQR3, indirectLQR4, indiretLQR5}.
Similarly, the direct approach to data-driven LQR has been approached using behavioral methods \cite{DataDriven_dePersis_1}, gradient-based methods \cite{directLQR_gradient1, directLQR_gradient2}, and Riccati-based methods \cite{directLQR_Riccati}.
More recently, this problem has been solved using concepts from behavioral systems theory and subspace methods \cite{behavioral_system_theory} using \textit{Willems' Fundamental Lemma}, which characterizes the trajectory of an LTI system through the range space of the input/output data matrix \cite{WFL}.
This gives rise to a parametrization of the feedback gains as a linear combination of the collected data and allows to perform direct data-driven control design through semi-definite programming (SDP) \cite{DataDriven_dePersis_2}.
To this end, the authors in \cite{DataDriven_CE, DataDriven_regularization} were able to come up with a CE and regularized design, thus bridging the gap between the indirect and direct approaches.

All these optimization-based approaches to direct data-driven LQR design are done in the deterministic setting, assuming a single realization of the state trajectories from $x_0$ to $x_f$. 
To the best of our knowledge, this paper is the first work that looks at the problem where the state is a stochastic process instead of a deterministic trajectory.
Since it is still unclear how to perform data-driven designs in the context of process noise, we limit our analysis to uncertainties in the initial and final states, leaving the case of process noise for future investigation.
Specifically, we assume that the boundary values of the state must follow a normal distribution, and the objective is to steer the entire \textit{distribution} of states from an initial to a final one.
This problem is referred to in the literature as \textit{covariance steering} (CS) and has been extensively studied in the past couple of years in the Gaussian case \cite{Max1, EB1, Halder, kazu_PP, exact_CS}, non-Gaussian case \cite{nonGaussianCS}, nonlinear case \cite{nonlinearCS}, as well as in the presence of chance-constraints on the state and control \cite{JP1, EB2, JoshJack}.
The CS problem in the unconstrained case has an elegant separation property between the mean and covariance dynamics, and both problems turn into convex programs \cite{Max1}.

In this work, we present a data-driven covariance control design that steers the mean dynamics through an indirect sysID approach, and steers the covariance dynamics through a direct approach, using the techniques outlined in \cite{exact_CS} and \cite{DataDriven_dePersis_2}.
For the direct approach of covariance steering the certainty-equivalence approach is used, which adds an orthogonality constraint to the resulting optimization problem.
We also analyze the regularized approach, which is a hybrid of the two.
We finally study the robustness of these approaches through a set of randomized simulations with noisy data and compare the various covariance steering frameworks.

The paper is organized as follows. Section~\ref{sec:PS} introduces the data-driven CS problem. 
Section~\ref{sec:PS_reformulation} defines the control policy and reformulates the problem into the mean and the covariance subproblems. 
Section~\ref{sec:DataDrivenDesign} reviews the Fundamental Lemma and derives the data-driven convex programs for both the mean and the covariance steering problems.
Section~\ref{sec:CE_reg} discusses the CE and regularization approaches to the direct covariance steering design.
Section~\ref{sec:sims} presents our numerical case studies, and, lastly, Section~\ref{sec:conclusion} concludes the paper.

\section{Problem Statement}~\label{sec:PS}

We consider the following discrete-time \textit{deterministic} time-invariant system
\begin{equation}~\label{eq:1}
	x_{k+1} = A x_k + B u_k,
\end{equation}
where $x\in\mathbb{R}^n,u\in\mathbb{R}^m$, with time steps $k = 0,\ldots,N-1$, where $N$ representing the finite horizon. 
The system matrices $A$ and $B$ are assumed to be unknown.
The uncertainty in the system resides in the initial state $x_0$, which is a random $n$-dimensional vector drawn from the normal distribution
\begin{equation}~\label{eq:2}
	x_0 \sim \mathcal{N}(\mu_i, \Sigma_i),
\end{equation}
where $\mu_0\in\mathbb{R}^n$ is the initial state mean and $\Sigma_0\in\mathbb{R}^{n\times n} \succ 0$ is the initial state covariance. 
Thus, although the system dynamics \eqref{eq:1} is deterministic, the uncertainty in the initial state makes the state evolution $\{x_k\}_{k=1}^{N}$ a random process. 
The objective is to steer the trajectories of (\ref{eq:1}) from the initial distribution (\ref{eq:2}) to the terminal distribution
\begin{equation}~\label{eq:3}
	x_N = x_f \sim \mathcal{N}(\mu_f, \Sigma_f),
\end{equation}
where $\mu_f\in\mathbb{R}^n$ and $\Sigma_f \succ 0$ are the desired state mean and covariance at time $N$, respectively. 
The cost function to be minimized is 
\begin{equation}~\label{eq:4}
	J(u_0,\ldots,u_{N-1}) \coloneqq \mathbb{E}\bigg[\sum_{k=0}^{N-1}x_k^\intercal Q_k x_k + u_k^\intercal R_k u_k\bigg],
\end{equation}
where $Q_k\succeq 0$ and $R_k \succ 0$ for all $k = 0,\ldots,N-1$. 

\begin{problem}~\label{problem:1}
	Given the unknown linear system (\ref{eq:1}), find the optimal control sequence $\{u_k\}_{k=0}^{N-1}$ that minimizes the objective function (\ref{eq:4}), subject to the initial state (\ref{eq:2}) and terminal state (\ref{eq:3}).
\end{problem}

\section{Problem Reformulation}~\label{sec:PS_reformulation}

Borrowing from the work in \cite{exact_CS}, we adopt the control policy
\begin{equation}~\label{eq:controlLaw}
	u_k = K_k(x_k - \mu_k) + v_k,
\end{equation}
where $K_k\in\Re^{m\times n}$ are the feedback gains that control the covariance of the state, and $v_k\in\Re^{m}$ is the feedforward term that controls the mean of the state.
Under this control law, it is possible to re-write Problem~\ref{problem:1} as a convex program, which can be solved to optimality using off-the-shelf solvers.
Since the state distribution remains Gaussian at all time steps, and since a normal distribution is completely characterized by its first two moments, we decompose the system dynamics \eqref{eq:1} into the mean dynamics and covariance dynamics.
Plugging in the control law \eqref{eq:controlLaw} into the dynamics \eqref{eq:1} yields the decoupled dynamics
\begin{subequations}~\label{eq:Dynamics}
	\begin{align}
		\mu_{k+1} &= A\mu_k + B v_k, \label{eq:meanDynamics} \\
		\Sigma_{k+1} &= (A + B K_k)\Sigma_k(A + B K_k)^\intercal. \label{eq:covDynamics}
	\end{align}
\end{subequations}
In the sequel, and similar to \cite{exact_CS_2}, we treat the moments of the intermediate states $\{\Sigma_k, \mu_k\}_{k=1}^{N-1}$ in the steering horizon as decision variables in the resulting optimization problem.

Similar to the dynamics, the cost function can be decoupled and written in terms of the first two moments as follows
\begin{subequations}
	\begin{align}
		J &= J_{\mu}(\mu_k, v_k) + J_{\Sigma}(\Sigma_k, K_k), \\
		J_{\mu} &:= \sum_{k=0}^{N-1}\left(\mu_k^\intercal Q_k \mu_k + v_k^\intercal R_k v_k\right), \label{eq:meanCost} \\
		J_{\Sigma} &:= \sum_{k=0}^{N-1}\Big(\textrm{tr}(Q_k\Sigma_k) + \textrm{tr}(R_kK_k\Sigma_k K_k^\intercal) \Big). \label{eq:covCost}
	\end{align}
\end{subequations}
Lastly, the two boundary conditions are written as 
\begin{subequations}
	\begin{align}
		&\mu_0 = \mu_i, \quad \mu_N = \mu_f, \label{eq:meanBCs} \\
		&\Sigma_0 = \Sigma_i, \quad \Sigma_N = \Sigma_f. \label{eq:covBCs}
	\end{align}
\end{subequations}
Problem~\ref{problem:1} is now recast as the following two sub-problems.

\begin{problem}~\label{problem:2a}
	Given the mean dynamics \eqref{eq:meanDynamics}, find the optimal mean trajectory $\{\mu_k\}_{k=1}^{N-1}$ and feedforward control $\{v_k\}_{k=0}^{N-1}$ that minimize the mean cost \eqref{eq:meanCost} subject to the boundary conditions \eqref{eq:meanBCs}.
\end{problem} 

\begin{problem}~\label{problem:2b}
	Given the covariance dynamics \eqref{eq:covDynamics}, find the optimal covariance trajectory $\{\Sigma_k\}_{k=1}^{N-1}$ and feedback gains $\{K_k\}_{k=0}^{N-1}$ that minimize the covariance cost \eqref{eq:covCost} subject to the boundary conditions \eqref{eq:covBCs}.
\end{problem} 

\begin{rem}~\label{rem:1}
	Both the mean and covariance steering problems rely on the system matrices $A$ and $B$ through the system dynamics \eqref{eq:Dynamics}.
	Thus, the problems, as stated above, are not yet amenable to a data-driven solution.
\end{rem} 

\begin{rem}~\label{rem:2}
	Problem~\ref{problem:2a} is a standard quadratic program with linear constraints that can be solved analytically given knowledge of the system matrices \cite{Max1}.
	As such, we will perform an indirect design by first estimating the $A$ and $B$ matrices to solve this problem in a data-driven fashion.
	Problem~\ref{problem:2b}, however, is a non-linear program due to the cost term $\mathrm{tr}(R K_k \Sigma_k K_k^\intercal)$ and the covariance dynamics.
\end{rem}

In the following section, we review the main concepts from behavioral systems theory that will allow us to parametrize the decision variables in Problems~\ref{problem:2a} and \ref{problem:2b} in terms of input and output data streams.

\section{Data-Driven Parameterization}~\label{sec:DataDrivenDesign}

In order to incorporate data into the problem formulation, we use the concept of persistence of excitation, along with Willems' Fundamental Lemma \cite{WFL} to parametrize the feedforward and feedback gains of the control policy.
First, recall the following definitions.

\begin{defn}~\label{def:1}
	Given a signal $\{z_k\}_{k=0}^{T-1}$ where $z\in\Re^{\sigma}$, we denote its Hankel matrix by
	\begin{equation}
		Z_{i,\ell, j} := 
		\begin{bmatrix}
			z_i & z_{i+1} & \ldots & z_{i+j-1} \\
			z_{i+1} & z_{i+2} & \ldots & z_{i + j} \\
			\vdots & \vdots & \ddots & \vdots \\
			z_{i+\ell-1} & z_{i+\ell} & \ldots & z_{i+\ell+j-2}
		\end{bmatrix} \in \Re^{\sigma\ell \times j},
	\end{equation}
	where $i\in\mathbb{Z}$ and $\ell,j\in\mathbb{N}$.
	For shorthand notation, if $\ell=1$, we denote the Hankel matrix by
	\begin{equation}
		Z_{i,1,j} \equiv Z_{i,j} := [z_i \ z_{i+1} \ \ldots \ z_{i+j-1}].
	\end{equation}
\end{defn}

\begin{defn}~\label{def:2}
	The signal $\{z_k\}_{k=0}^{T-1}: [0,T-1]\cap\mathbb{Z}\rightarrow\Re^{\sigma}$ is \textit{persistently exciting} of order $\ell$ if the matrix $Z_{0,\ell,j}$ with $j = T - \ell + 1$ has rank $\sigma\ell$.
\end{defn}

\begin{cor}~\label{rem:3}
	In order for a signal to  be persistently exciting of order $\ell$, it must be sufficiently long, i.e., it must hold that $T \geq (\sigma + 1)\ell - 1$.
\end{cor}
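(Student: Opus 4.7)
The plan is to derive the bound $T \geq (\sigma+1)\ell - 1$ by a direct dimension-counting argument on the Hankel matrix in Definition~\ref{def:2}. By Definition~\ref{def:1}, the matrix $Z_{0,\ell,j}$ has $\sigma\ell$ rows and $j$ columns, where persistent excitation of order $\ell$ is defined by requiring $\mathrm{rank}(Z_{0,\ell,j}) = \sigma\ell$, i.e., full row rank.

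First, I would invoke the standard linear-algebra fact that for any matrix $M\in\Re^{p\times q}$, $\mathrm{rank}(M) \leq \min(p,q)$. Applying this to $Z_{0,\ell,j}$ with $p=\sigma\ell$ and $q=j$, the requirement $\mathrm{rank}(Z_{0,\ell,j}) = \sigma\ell$ immediately forces $j \geq \sigma\ell$, since otherwise the rank would be bounded above by $j < \sigma\ell$, contradicting persistent excitation of order $\ell$.

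Next, I would substitute the relation $j = T - \ell + 1$ from Definition~\ref{def:2} into the inequality $j \geq \sigma\ell$, giving
\begin{equation*}
	T - \ell + 1 \geq \sigma\ell,
\end{equation*}
which upon rearrangement yields the desired bound $T \geq (\sigma+1)\ell - 1$.

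There is essentially no obstacle here, as the claim is a purely necessary condition obtained by comparing the dimensions of the Hankel matrix; no structural property of the signal $\{z_k\}$ beyond its codomain dimension $\sigma$ is used. I would close with a brief remark that this is only a necessary condition, not sufficient — the existence of signals achieving the rank bound (and hence sufficiency of the inequality for some appropriately chosen $\{z_k\}$) is a separate matter handled by Willems' Fundamental Lemma in the sequel.
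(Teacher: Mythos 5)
Your argument is correct and is precisely the dimension-counting reasoning that the paper leaves implicit (the corollary is stated without proof): full row rank of $Z_{0,\ell,j}\in\Re^{\sigma\ell\times j}$ requires $j=T-\ell+1\geq\sigma\ell$, which rearranges to $T\geq(\sigma+1)\ell-1$. Your closing remark that this is only a necessary condition is also consistent with how the paper uses the bound (e.g., Remark~\ref{rem:4} calls the analogous condition ``sufficient to check'' for length, with the actual rank property guaranteed separately).
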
 

Suppose we carry out an experiment of duration $T\in\mathbb{N}$ where we collect input and state data $\{u_k\}_{k=0}^{T-1}$ and $\{x_k\}_{k=0}^{T}$, respectively.
Let the corresponding Hankel matrices for the input sequence, state sequence, and shifted state sequence (with $\ell = 1$) be 
\begin{subequations}
	\begin{align}
		U_{0,T} &:= [u_0 \ u_1 \ \ldots \ u_{T-1}], \\
		X_{0,T} &:= [x_0 \ x_1 \ \ldots \ x_{T-1}], \\
		X_{1,T} &:= [x_1 \ x_2 \ \ldots \ x_{T}].
	\end{align}
\end{subequations}

The next result characterizes the rank of the stacked Hankel matrices of the input and output data, and is central to the method used to formulate a tractable data-driven covariance steering problem.

\begin{lem}[Willems' Fundamental Lemma\cite{WFL}] \label{lemma1}
	Suppose that system \eqref{eq:1} is controllable.
	If the input signal $\{u_k\}_{k=0}^{T-1}$ is persistently exciting of order $n + 1$, then
	\begin{equation}
		\mathrm{rank}
		\begin{bmatrix}
			U_{0,T} \\
			X_{0,T}
		\end{bmatrix}
		= n + m.
	\end{equation}
\end{lem}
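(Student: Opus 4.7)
The plan is to prove the two inequalities $\mathrm{rank}\,[U_{0,T}^\top,\, X_{0,T}^\top]^\top \leq n+m$ and $\geq n+m$. The upper bound is immediate from the row count, so the work is to show the left null space of the stacked matrix is trivial. I would proceed by contradiction: assume there exist $\alpha \in \mathbb{R}^m$ and $\beta \in \mathbb{R}^n$, not both zero, with
\begin{equation*}
\alpha^\intercal u_k + \beta^\intercal x_k = 0, \qquad k = 0, 1, \ldots, T-1,
\end{equation*}
and aim to derive a contradiction by propagating this identity through the dynamics until persistence of excitation and controllability can be invoked.

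The central step is to apply the identity at time $k+j$ and substitute the closed-form solution $x_{k+j} = A^j x_k + \sum_{i=0}^{j-1} A^{j-1-i} B u_{k+i}$. Doing this for $j = 0, 1, \ldots, n$ stacks into
\begin{equation*}
\mathcal{O}_\beta\, x_k \;+\; L\, \bar{u}_k \;=\; 0,
\end{equation*}
where $\mathcal{O}_\beta$ is the $(n+1)\times n$ matrix whose $j$th row is $\beta^\intercal A^{j-1}$, $L$ is the block lower-triangular Toeplitz matrix with $\alpha^\intercal$ on the diagonal and $\beta^\intercal A^{j-1-i}B$ off-diagonal, and $\bar{u}_k = [u_k^\intercal, u_{k+1}^\intercal, \ldots, u_{k+n}^\intercal]^\intercal$. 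Because $\mathcal{O}_\beta$ has $n+1$ rows living in $\mathbb{R}^n$, the Cayley--Hamilton theorem supplies a nontrivial vector $c \in \mathbb{R}^{n+1}$ with $c^\intercal \mathcal{O}_\beta = 0$ (built from the characteristic polynomial of $A$). Applying $c^\intercal$ to the stacked system kills the state term and leaves a purely input-side identity $\gamma^\intercal \bar{u}_k = 0$ valid for every $k = 0, \ldots, T-n-1$, which is the same as $\gamma^\intercal U_{0,n+1,T-n} = 0$.

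At this point the hypothesis of persistence of excitation of order $n+1$, combined with Definition~\ref{def:2}, forces $U_{0,n+1,T-n}$ to have full row rank $m(n+1)$, so $\gamma = 0$. Reading off the block structure of $\gamma$, the leading block yields $\alpha = 0$, and the remaining blocks yield $\beta^\intercal [B, AB, \ldots, A^{n-1}B] = 0$; controllability of $(A,B)$ then forces $\beta = 0$, contradicting $(\alpha,\beta) \neq 0$. The main obstacle I anticipate is the bookkeeping in the Cayley--Hamilton elimination step: one must track carefully how the coefficients of $\gamma$ depend on $\alpha$, $\beta$, and the powers $\beta^\intercal A^{j} B$, and verify that $\gamma = 0$ actually disentangles into the two desired conclusions rather than an underdetermined coupled system. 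If that unwinding becomes unwieldy, a cleaner alternative is to first prove the deeper-Hankel form (rank of $[U_{0,n+1,T-n}^\intercal, X_{0,1,T-n}^\intercal]^\intercal$ equals $m(n+1)+n$) directly from controllability and PE, and then extract the $\ell=1$ statement of Lemma~\ref{lemma1} as a corollary by restricting to appropriate row blocks.
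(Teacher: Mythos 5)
The paper does not prove this lemma at all---it is imported verbatim from Willems et al.\ \cite{WFL} as a citation---so there is no in-paper proof to compare against. Judged on its own, your argument is correct and is essentially the classical proof of the fundamental lemma, specialized to full state measurements and Hankel depth one: trivial left kernel of the stacked data matrix, propagation of a putative annihilator $(\alpha,\beta)$ through the dynamics over a window of length $n+1$, elimination of the state via a left annihilator of $\mathcal{O}_\beta$ supplied by Cayley--Hamilton, and then persistence of excitation of order $n+1$ to force $\gamma=0$. Two small points. First, the use of Cayley--Hamilton (rather than mere dimension counting on the $n+1$ rows of $\mathcal{O}_\beta$) is essential and you were right to invoke it: it guarantees a $c$ with last entry equal to one, which is what makes the unwinding of $\gamma = c^\intercal L = 0$ tractable. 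Second, the block of $\gamma$ that isolates $\alpha$ is the \emph{trailing} one (the coefficient of $u_{k+n}$, which equals $c_n\alpha^\intercal = \alpha^\intercal$), not the leading one; the leading block mixes $\alpha$ with the terms $\beta^\intercal A^{j}B$. Once $\alpha=0$ is extracted from the trailing block, a backward induction over the remaining blocks gives $\beta^\intercal A^{j}B=0$ for $j=0,\dots,n-1$, and controllability finishes the argument exactly as you describe. So the anticipated ``bookkeeping obstacle'' does resolve cleanly, and your fallback (proving the depth-$(n+1)$ statement first) is not needed.
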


\begin{rem}~\label{rem:4}
	In order to ensure that the input $u_k\in\Re^{m}$ is persistently exciting of order $n + 1$ to satisfy Willems' Fundamental Lemma, it is sufficient to check that $T \geq (m + 1)n + m$. 
	In practice, this can always be achieved in real-time during data collection.
\end{rem} 

Lemma~\ref{lemma1} implies that \textit{any} arbitrary input-state sequence can be expressed as a linear combination of the collected input-state data. 
Furthermore, this can be extended \cite{DataDriven_dePersis_1} to parameterizing any arbitrary feedback interconnection as well.
In the following section, based on the work in \cite{DataDriven_dePersis_2}, we parameterize the feedback gains in terms of the input-state data and reformulate the covariance steering problem as a semi-definite program (SDP). 

\subsection{Direct Data-Driven Covariance Steering}~\label{sec:directCS}
Assuming the signal $\{u_k\}_{k=0}^{T-1}$ is persistently exciting of order $n + 1$, we can express the feedback gains as follows
\begin{equation}~\label{eq:WFL_gains}
	\begin{bmatrix}
		K_k \\
		I_n
	\end{bmatrix}
	= 
	\begin{bmatrix}
		U_{0,T} \\
		X_{0,T}
	\end{bmatrix}
	G_k,
\end{equation}
where $G_k \in \Re^{T\times n}$ are newly defined decision variables that provide the link between the feedback gains and the input-state data.
Furthermore, we can re-write the covariance dynamics \eqref{eq:covDynamics} as
\begin{align}  \label{eq:covDynamics_Gvars}
	\Sigma_{k+1} &= [B \ \ A]
	\begin{bmatrix}
		K_k \\
		I_n
	\end{bmatrix}
	\Sigma_k 
	\begin{bmatrix}
		K_k \\
		I_n
	\end{bmatrix}^\intercal
	[B \ \ A]^\intercal \nonumber \\
	&= X_{1,T} G_k \Sigma_k G_k^\intercal X_{1,T}^\intercal,
\end{align}
where we use the fact that $X_{1,T} = A X_{0,T} + B U_{0,T}$.
Similarly, the covariance cost \eqref{eq:covCost} can be re-written as 
\begin{equation}~\label{eq:covCost_Gvars}
	J_{\Sigma, k} = \mathrm{tr}(Q_k\Sigma_k) + \mathrm{tr}(R_k U_{0,T} G_k \Sigma_k G_k^\intercal U_{0,T}^\intercal).
\end{equation}
To remedy the nonlinearity $G_k\Sigma_k G_k^\intercal$ in the covariance dynamics and the cost, define the new decision variables $S_k := G_k \Sigma_k\in\Re^{T\times n}$, which yields the covariance dynamics 
\begin{equation}~\label{eq:covDynamics_Svars}
	\Sigma_{k+1} = X_{1,T} S_k \Sigma_k^{-1} S_k X_{1,T}^\intercal,
\end{equation}
and the covariance cost
\begin{equation}~\label{eq:covCost_Svars}
	J_{\Sigma, k} = \mathrm{tr}(Q_k \Sigma_k) + \mathrm{tr}(R_k U_{0,T} S_k \Sigma_k^{-1} S_k^\intercal U_{0,T}^\intercal).
\end{equation}
This problem is still non-convex due to the nonlinear term $S_k\Sigma_k^{-1}S_k^\intercal$.
To this end, let us relax the covariance dynamics by defining a new decision variable $Y_k \succeq S_k \Sigma_k^{-1}S_k^\intercal$, which yields the relaxed optimization problem
\begin{subequations}~\label{eq:convexProblem}
	\begin{align}
		&\hspace*{-4mm} \min_{\Sigma_k, S_k Y_k} \bar{J}_{\Sigma} = \sum_{k=0}^{N-1}\left(\mathrm{tr}(Q_k\Sigma_k) + \mathrm{tr}(R_kU_{0,T}Y_kU_{0,T}^\intercal)\right), \label{eq:convexProblem_cost}
	\end{align}
	such that, for all $k = 0,\ldots, N - 1$,
	\begin{align}
		&C_k := S_k \Sigma_k^{-1} S_k^\intercal - Y_k \preceq 0, \label{eq:convexProblem_ineqConstraint} \\
		&G_k^{(1)} := X_{1,T} Y_k X_{1,T}^\intercal - \Sigma_{k+1} = 0, \label{eq:convexProblem_eqConstraint1} \\
		&G_k^{(2)} := \Sigma_k - X_{0,T} S_k = 0, \label{eq:convexProblem_eqConstraint2}
	\end{align}
\end{subequations}
with the boundary conditions \eqref{eq:covBCs}.
The last equality constraint \eqref{eq:convexProblem_eqConstraint2} comes from the second block in \eqref{eq:WFL_gains} by multiplying $\Sigma_k$ on the right.
The relaxed problem \eqref{eq:convexProblem} is convex, since the constraint \eqref{eq:convexProblem_ineqConstraint} can be written using the Schur complement as the linear matrix inequaltiy (LMI)
\begin{equation}
	\begin{bmatrix}
		\Sigma_k & S_k^\intercal \\
		S_k & Y_k
	\end{bmatrix} \succeq 0.
\end{equation}
The equality constraint \eqref{eq:convexProblem_eqConstraint1} and cost \eqref{eq:convexProblem_cost}, on the other hand, are simply linear in all the decision variables, and hence convex.

%
\subsection{Indirect Data-Driven Mean Steering}~\label{sec:meanSteering}

Given the mean dynamics \eqref{eq:meanDynamics} in terms of the open-loop control $v_k$, Lemma~\ref{lemma1} also provides a system identification type of result using the following theorem.
\begin{theorem}~\label{theorem:1}
	Suppose the signal $u_k$ is persistently exciting of order $n + 1$.
	Then, the system \eqref{eq:meanDynamics} has the following equivalent representation
	\begin{equation}  \label{eq:mut+1}
		\mu_{k+1} = X_{1,T}
		\begin{bmatrix}
			U_{0,T} \\
			X_{0,T}
		\end{bmatrix}^{\dagger}
		\begin{bmatrix}
			v_k \\
			\mu_k
		\end{bmatrix}.
	\end{equation}
\end{theorem}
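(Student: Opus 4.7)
The plan is to reduce the claim to an elementary identity about the Moore--Penrose pseudoinverse of a full row rank matrix, using Willems' Fundamental Lemma as the rank guarantee. The key observation is that, by the data generation model, the shifted state Hankel matrix factors through the stacked input--state data as
\begin{equation*}
    X_{1,T} = A X_{0,T} + B U_{0,T} = \bigl[B \ \ A\bigr] \begin{bmatrix} U_{0,T} \\ X_{0,T} \end{bmatrix}.
\end{equation*}
This identity already displays the dynamics matrices on the right-hand side; the only remaining task is to show that post-multiplying by the pseudoinverse recovers $[B \ A]$ exactly.

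First, I would apply Lemma~\ref{lemma1} under the persistence of excitation hypothesis on $\{u_k\}_{k=0}^{T-1}$ to conclude that the stacked matrix $M := \bigl[U_{0,T}^\intercal \ X_{0,T}^\intercal\bigr]^\intercal \in \Re^{(m+n)\times T}$ has full row rank $n+m$. Since $T \geq (m+1)n+m \geq n+m$ (Remark~\ref{rem:4}), the matrix is wide, and full row rank implies that its Moore--Penrose pseudoinverse satisfies the right-inverse property $M M^{\dagger} = I_{n+m}$. Explicitly, $M^{\dagger} = M^\intercal (M M^\intercal)^{-1}$, where invertibility of $M M^\intercal$ is exactly the full row rank condition.

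Combining these two ingredients yields
\begin{equation*}
    X_{1,T} M^{\dagger} = [B \ \ A]\, M M^{\dagger} = [B \ \ A],
\end{equation*}
and therefore
\begin{equation*}
    X_{1,T} \begin{bmatrix} U_{0,T} \\ X_{0,T} \end{bmatrix}^{\dagger} \begin{bmatrix} v_k \\ \mu_k \end{bmatrix} = [B \ \ A]\begin{bmatrix} v_k \\ \mu_k \end{bmatrix} = B v_k + A\mu_k = \mu_{k+1},
\end{equation*}
which is precisely \eqref{eq:mut+1}.

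There is no real obstacle here; the only subtle point is confirming that $MM^{\dagger}=I$ holds for the correct side (right, because $M$ has more columns than rows and full row rank). A potential pitfall to call out in the write-up would be the opposite scenario: if one only had full column rank then $M^{\dagger}M = I$ would hold instead, which would not suffice for this argument. Willems' Fundamental Lemma gives exactly the row-rank condition needed, so the argument closes cleanly in a few lines without any further computation.
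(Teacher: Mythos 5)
Your proof is correct and follows essentially the same route as the paper's: both hinge on the factorization $X_{1,T} = [B \ \ A]\,W_0$ and on Willems' Fundamental Lemma guaranteeing full row rank of $W_0$. The only cosmetic difference is that you invoke the right-inverse identity $W_0 W_0^{\dagger} = I_{n+m}$ directly, whereas the paper parametrizes the full solution set of $W_0 g = q$ via Rouch\'e--Capelli and then kills the homogeneous part using $X_{1,T}\Pi_{W_0} = 0$; the two arguments are equivalent, and you correctly flag the row-rank (rather than column-rank) condition as the point that needs checking.
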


\begin{proof}
	See Appendix~A.
\end{proof}

\begin{rem}
	Theorem~\ref{theorem:1} gives a data-based open-loop representation of a (noise-less) linear system.
	One may equivalently interpret equation~\eqref{eq:mut+1}
	as the solution to the least-squares problem
	\begin{equation}
		\min_{B,A} \left\|X_{1,T} - [B \ \ A]
		\begin{bmatrix}
			U_{0,T} \\
			X_{0,T}
		\end{bmatrix}
		\right\|_{F},
	\end{equation}
	where $\|\cdot\|_{F}$ is the Frobenius norm.
	Thus, equation \eqref{eq:mut+1} provides
	the solution for the system matrices that best approximates the system dynamics.
\end{rem}

Using Theorem~\ref{theorem:1}, we can express the mean steering problem as the following convex problem
\begin{subequations}~\label{eq:meanProblem}
	\begin{align}
		&\qquad\qquad\min_{\mu_k, v_k} J_{\mu} = \sum_{k=0}^{N-1} (\mu_k^\intercal Q_k \mu_k + v_k^\intercal R_k v_k), \label{eq:meanProblem_cost}
	\end{align}
	such that, for all $k = 0,\ldots, N - 1,$
	\begin{align}
		&\qquad\qquad H_k^{(1)} := F_{\mu} \mu_k + F_{v} v_k - \mu_{k+1} = 0, \label{eq:meanProblem_eqConstraint1}
	\end{align}
\end{subequations}
with the boundary conditions \eqref{eq:meanBCs}, where $F_{\mu}\in\Re^{n\times n}$ and $F_{v}\in\Re^{n\times m}$ result from the partition of
\begin{equation}
	F := X_{1,T}
	\begin{bmatrix}
		U_{0,T} \\
		X_{0,T}
	\end{bmatrix}^{\dagger} = 
	\begin{bmatrix}
		F_{v} \\
		F_{\mu}
	\end{bmatrix} \in \Re^{n\times (m + n)}.
\end{equation}

\section{Certainty Equivalence and Regularized Data-Driven Methods}~\label{sec:CE_reg}

In this section, we establish the link between the direct CS design in \ref{sec:directCS} and the indirect design, as well as briefly outline a regularized design based on \cite{DataDriven_regularization} that trade-offs the two frameworks.
\subsection{Certainty-Equivalence Design}~\label{subsec:CE}
For notational simplicity, let 
\begin{equation*}
	W_0 := 
	\begin{bmatrix}
		U_{0,T} \\
		X_{0,T}
	\end{bmatrix}.
\end{equation*}
In the direct data-driven covariance control design, the set of optimal solutions $G_k^*$ to \eqref{eq:convexProblem} coincides with the set of solutions to \eqref{eq:WFL_gains}, that is,
\begin{equation}
	\{G_k^* : (\Sigma_k^*, S_k^*, Y_k^*) \in \mathrm{argmin}\eqref{eq:convexProblem}\} = W_0^{\dagger}
	\begin{bmatrix}
		K_k^* \\
		I_n
	\end{bmatrix} + G_{\mathrm{hom}},
\end{equation}
where $G_{\mathrm{hom}}$ is any matrix in the null space of $W_0$. 
Let $\Pi_{W_0} := I_{T} - W_0^\dagger W_0$ be the orthogonal projection on the nullspace of $W_0$.
It has been shown in \cite{DataDriven_CE} that by introducing the extra \textit{orthogonality} constraint $\Pi_{W_0} G_k = 0$, for all $k = 1,\ldots, N$, results in an optimization problem \eqref{eq:convexProblem} that is \textit{equivalent} to the corresponding indirect design, which amounts to first performing system identification then control design on the approximate system, as outlined in Section~\ref{sec:meanSteering}.
Since $S_k$ is a decision variable in \eqref{eq:convexProblem}, this amount to adding the additional equality constraints $\Pi_{W_0} S_k = 0$ to \eqref{eq:convexProblem}.
In the context of covariance steering, the indirect design is equivalent to the \textit{bi-level} program
\begin{subequations}~\label{eq:biLevelProblem}
	\begin{align}
		&\min_{\Sigma_k, U_k Y_k} \bar{J}_{\Sigma} = \sum_{k=0}^{N-1}  \mathrm{tr}(Q_k\Sigma_k) + \mathrm{tr}(R_kY_k) , \label{eq:biLevelProblem_cost} 
	\end{align}
	such that, for all $k = 0,\ldots, N - 1$,
	\begin{align}
		&C_k^{(1)} := 
		\begin{bmatrix}
			\Sigma_k & U_k^\intercal \\
			U_k & Y_k
		\end{bmatrix} \succeq 0, \label{eq:biLevelProblem_ineqConstrain1} \\
		&C_k^{(2)} := 
		\begin{bmatrix}
			\Sigma_k & U_k^\intercal \hat{B}^\intercal \\
			\hat{B} U_k & \hat{\Gamma}_k
		\end{bmatrix} \succeq 0, \label{eq:biLevelProblem_ineqConstraint2} \\
		&[\hat{B} \ \ \hat{A}] = \mathrm{argmin}_{B, A} \|X_{1} - [B \ \ A] W_0\|_{F},
	\end{align}
\end{subequations}
where $\Gamma_k := \Sigma_{k+1} - \hat{A}\Sigma_k\hat{A}^\intercal - \hat{A}U_k^\intercal \hat{B}^\intercal - \hat{B}U_k\hat{A}^\intercal$.
See Appendix~B for details on this derivation, which is based on the work in \cite{exact_CS}.
\subsection{Regularized Design}

By adding the constraint $\Pi_{W_0}S_k = 0$ to the objective function, we arrive at a regularized direct data-driven covariance steering formulation.
Letting $\lambda\geq 0$ be a tunable hyperparameter that balances indirect with direct designs, the regularized problem becomes
\begin{subequations}~\label{eq:regularizedProblem}
	\begin{align}
		&\min_{\Sigma_k, S_k Y_k} \bar{J}_{\Sigma} = \sum_{k=0}^{N-1}\mathrm{tr}(Q_k\Sigma_k) + \mathrm{tr}(R_kU_{0,T}Y_kU_{0,T}^\intercal) \nonumber \\
		&\qquad\qquad\qquad\qquad\qquad\qquad\qquad + \lambda \|\Pi_{W_0} S_k\|, \label{eq:regularizedProblem_cost}
	\end{align}
	such that, for all $k = 0,\ldots, N - 1$,
	\begin{align}
		&C_k := S_k \Sigma_k^{-1} S_k^\intercal - Y_k \preceq 0, \label{eq:regularizedProblem_ineqConstraint} \\
		&G_k^{(1)} := X_{1,T} Y_k X_{1,T}^\intercal - \Sigma_{k+1} = 0, \label{eq:regularizedProblem_eqConstraint1} \\
		&G_k^{(2)} := \Sigma_k - X_{0,T} S_k = 0. \label{eq:regularizedProblem_eqConstraint2}
	\end{align}
\end{subequations}
It can be shown \cite{DataDriven_regularization} that for $\lambda$ sufficiently large, the regularized design \eqref{eq:regularizedProblem} coincides with the certainty-equivalence design.

\section{Numerical Example}~\label{sec:sims}
To illustrate the proposed data-driven method, we run a set of 100 trials on the double integrator system
\begin{equation}
	A = 
	\begin{bmatrix}
		1 & 1 \\
		0 & 1
	\end{bmatrix}, \quad 
	B = 
	\begin{bmatrix}
		0 \\
		1
	\end{bmatrix},
\end{equation}
with initial distribution $\mu_0 = [20 \ -2]^\intercal, \Sigma_0 = \mathrm{diag}(1, 0.5)$ and terminal distribution $\mu_f = [0 \ 8]^\intercal, \Sigma_f = 0.5I_2$.
The state and control cost weights are $Q_k = 0.01I_2$ and $R_k = 1$, for all $k = 0,1,\ldots, N$, respectively.
We pick a control horizon $N = 10$ and data collection horizon $T = 15$ to ensure that $T \geq (m + 1)n + m = 5$.  
The data is generated for every trial by randomly sampling the initial state and input over the collection horizon $T$ from a standard normal distribution.
The following set of simulations was run on a 32 GB Intel i7-10850H @ 2.60 GHz computer.

\begin{figure}[!htb]
	\centering
	\hspace*{-0.5cm}
	\includegraphics[scale=0.29]{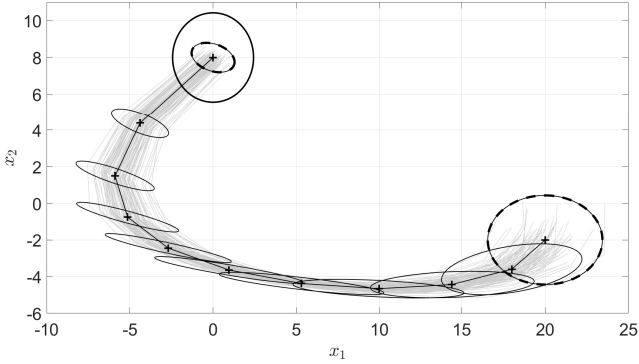}
	\caption{Monte Carlo trajectories of data-driven solution.}
	\label{fig:1}
\end{figure}

Figure~\ref{fig:1} shows the trajectories of the system using the data-driven framework on a set of 100 Monte Carlo runs.
As mentioned, the terminal covariance is indeed less than the desired one, as denoted by the solid black line.
Nevertheless, the control law successfully steers the system between the two distributions with no knowledge of the system matrices.

It is also fruitful to compare the gains and feedforward control to that of the \textit{model-based} covariance steering solution, as outlined in Appendix~B. 
Figure~\ref{fig:2} shows the difference between the two solutions with the corresponding mean and 3$\sigma$ errors over the set of trial runs.
\begin{figure}
	\centering
	\hspace*{-0.2cm}
	\begin{subfigure}[b]{0.55\textwidth}
		\includegraphics[width=0.95\linewidth]{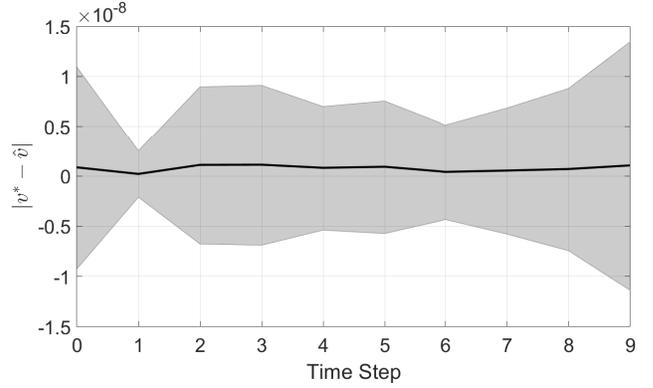}
		\caption{Absolute difference in feedforward control.}
		\label{fig:vdeltas} 
	\end{subfigure}
	\hspace*{-0.2cm}
	\begin{subfigure}[b]{0.55\textwidth}
		\includegraphics[width=0.95\linewidth]{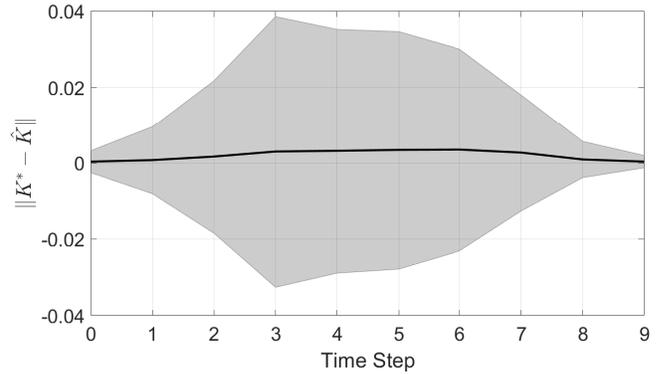}
		\caption{Two-norm difference in feedback control.}
		\label{fig:Kdeltas}
	\end{subfigure}
	\caption{Comparison of differences between model-based and data-driven covariance steering solutions in the noiseless case ($\beta = 0$) with mean errors (solid line) and 3$\sigma$ shaded error bars.}
	\label{fig:2}
\end{figure}
Figure~\ref{fig:2} shows that the two solutions are nearly exact, with the feedforward control having an error of within $10^{-8}$ and the feedback control within $10^{-2}$. 

Next, we compare the robustness properties of the various data-driven control designs as outlined in Section~\ref{sec:CE_reg}.
To this end, we add noise into the data collection and simulations but keep the designs as if there is no noise in the system.
The question then becomes how these methods will perform when the data is corrupted by noise.
For the data collection, we add an extra term $D w_k$ into the dynamics, where $w_k \sim \mathcal{N}(0,I_{n})$ and $D = \beta I_{n}$, with $\beta>0$ a tunable parameter for the noise intensity.
Figure~\ref{fig:FFcontrol_robustness} shows the effect of increasing levels of noise on the accuracy of the feedforward control data-driven solution. 
Even for moderately large noise levels ($\beta = 10^{-2}$), the 3$\sigma$ variance of the error is within 0.04.
Thus, indirect mean steering design is a fruitful avenue for designing robust nominal controllers against noise.

\begin{figure}[!htb]
	\centering
	\hspace*{-0.5cm}
	\includegraphics[scale=0.31]{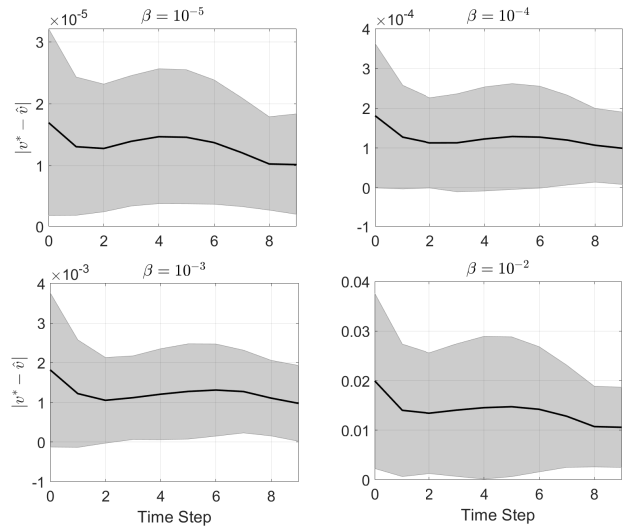}
	\caption{Absolute value error in data-driven feedforward control for different levels of exogenous noise.}
	\label{fig:FFcontrol_robustness}
\end{figure}

For the feedback design, we compare the direct solution as outlined in Section~\ref{sec:directCS} with the certainty-equivalence and regularized designs.
To this end, we fix the noise intensity to be $\beta = 10^{-3}$ and choose the regularization hyperparameter as $\lambda = 2\times 10^{-2}$.
Figure~\ref{fig:feedbackGain_robustness} shows the mean and 3$\sigma$ variance of the normed error of the corresponding data-driven methods when corrupted with noise.
Interestingly, the certainty-equivalence (i.e., indirect) approach is the \textit{most} robust among the three, achieving an error within $10^{-2}$, while the direct approach is the least robust.
\begin{figure}[!htb]
	\centering
	\hspace*{-0.5cm}
	\includegraphics[scale=0.35]{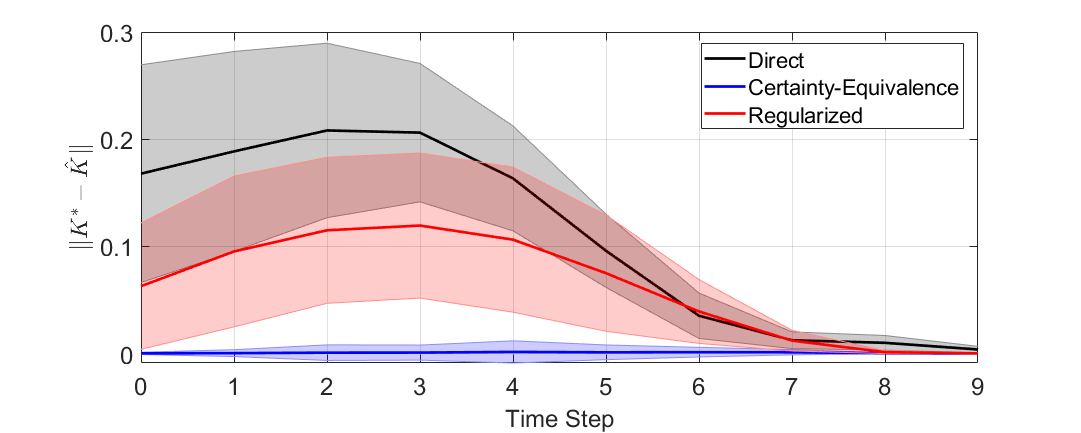}
	\caption{Norm error in data-driven feedback control for direct, regularized, and certainty-equivalence methods for a noise level $\beta=10^{-3}$ and regularization hyperparameter $\lambda=2\times 10^{-2}$.}
	\label{fig:feedbackGain_robustness}
\end{figure}
Table~\ref{table:covSteering_optimality} shows the mean and variance of the empirical error for the $k$th trial
\begin{equation}
	\mathcal{E}_k := \frac{|J^* - \hat{J}_k|}{J^*},
\end{equation}
between the optimal model-based cost and the data-driven costs.
\renewcommand{\arraystretch}{1.5} 
\begin{table}[!htb]
	\centering
	\caption{Mean cost error among different data-driven covariance steering approaches for corrupted data.}
	\begin{tabular}{llll}
		\hline
		$\times 10^4$ & Direct & Regularized & CE \\
		\hline
		$\mathbb{E}[\mathcal{E}_k]$ & 3.9886 & 4.3614 & 12.0494 \\
		$\sigma^2[\mathcal{E}_k]$ & 0.0009 & 0.0013 & 0.0261 \\
		\hline
	\end{tabular}
	\label{table:covSteering_optimality}
\end{table}
We see that although the direct data-driven solution is the \textit{least} robust, it is also the \textit{most} optimal in the sense that it achieves the closest cost with the true, model-based solution.
On the other hand, the CE approach has the best robustness but worst performance.
Thus, the regularized form of the problem is a useful design for balancing robustness with optimality by varying $\lambda$.


\section{Conclusion}~\label{sec:conclusion}

In this work, we have presented a tractable data-driven solution to the Louisville problem of steering the distribution of a deterministic linear system from one normal distribution to another.
The problem was solved by decoupling the dynamics into mean and covariance dynamics and then performing an indirect data-driven design for the mean motion, and a direct data-driven design for the covariance motion through a convex relaxation.
The data-driven solution matches almost exactly with its model-based counterpart.
Further work in this direction will look at how to incorporate noise into the system dynamics.
This problem is much harder to solve, as the introduction of random noise (a) cannot be measured in a real-time data-driven scenario, and (b) yields input-state data that can have multiple realizations.

\section{Acknowledgment}

This work has been supported by NASA University Leadership Initiative award 80NSSC20M0163 and ONR award N00014-18-1-2828.
The article solely reflects the opinions and conclusions of its authors and not any NASA entity.


\bibliography{Refs_main}
\bibliographystyle{IEEEtran}


\appendix

\section*{A.~Proof of Theorem~\ref{theorem:1}}

\setcounter{equation}{0}
\renewcommand{\theequation}{A.\arabic{equation}}
\begin{proof}
	Given the matrix $W_0\in\Re^{(m + n)\times T}$, the Rouché-Capelli theorem \cite{RoucheCapelli} states that, for any given $q\in\Re^{m + n}$, the linear system $q = W_0 g$ admits an infinite number of solutions $g\in\Re^{T}$, given by
	\begin{equation}
		g = W_0^\dagger q + \Pi_{W_0} w, \quad w\in\Re^{T},
	\end{equation}
	where $\Pi_{W_0} := I_T - W_0^\dagger W_0$ is the orthogonal projector onto the nullspace of $W_0$.
	In the context of the mean dynamics \eqref{eq:meanDynamics}, letting $q = [v_k^\intercal \ \mu_k^\intercal]^\intercal$ and noting that $[B \ \ A]W_0 = X_{1,T}$ yields
	\begin{subequations}
		\begin{align}
			\mu_{k+1} &= [B \ \ A] W_0 g_k \\
			&= X_{1,T} \left(W_0^\dagger 
			\begin{bmatrix}
				v_k \\
				\mu_k
			\end{bmatrix} + \Pi_{W_0} w_k\right).
		\end{align}  
	\end{subequations}
	Theorem~\ref{theorem:1} now follows from $X_{1,T}\Pi_{W_0} = 0$.
\end{proof}

\section*{B.~Derivation of Relaxed Model-Based Covariance Steering}

\setcounter{equation}{0}
\renewcommand{\theequation}{B.\arabic{equation}}
For clarity, Problem~\ref{problem:2b} is given by
\begin{subequations}~\label{eq:covProblem}
	\begin{align}
		&\min_{\Sigma_k, K_k} J_{\Sigma} = \sum_{k=0}^{N-1}\Big(\mathrm{tr}(Q_k\Sigma_k) + \mathrm{tr}(R_k K_k \Sigma_k K_k^\intercal)\Big), \label{eq:covProblem_cost} \\
		&\textrm{such that, for all} \ k = 0,\ldots, N - 1, \nonumber \\
		&\Sigma_{k+1} = (A + B K_k)\Sigma_k (A + B K_k), \label{eq:covProblem_dyn} \\
		&\Sigma_0 = \Sigma_i, \quad \Sigma_N = \Sigma_f \label{eq:covProblem_BCs}.
	\end{align}
\end{subequations}
This is a nonlinear program in the decision variables $\Sigma_k$ and $K_k$. 
To remedy this, we first introduce the change of variables $U_k := K_k \Sigma_k$ \cite{CS_change_of_vars}, from which \eqref{eq:covProblem} can be written in the equivalent form
\begin{subequations}~\label{eq:covProblem2}
	\begin{align}
		&\min_{\Sigma_k, U_k} J_{\Sigma} = \sum_{k=0}^{N-1}\Big(\mathrm{tr}(Q_k\Sigma_k) + \mathrm{tr}(R_k U_k \Sigma_k^{-1} U_k^\intercal)\Big), \label{eq:covProblem2_cost} \\
		&\textrm{such that, for all} \ k = 0,\ldots, N - 1, \nonumber \\
		&\Sigma_{k+1} = A \Sigma_k A^\intercal + B U_k A^\intercal + A U_k^\intercal B^\intercal + B U_k \Sigma_k^{-1}U_k^\intercal B^\intercal, \label{eq:covProblem2_dyn} \\
		&\Sigma_0 = \Sigma_i, \quad \Sigma_N = \Sigma_f \label{eq:covProblem2_BCs}.
	\end{align}
\end{subequations}
The new optimization problem \eqref{eq:covProblem2} is still nonlinear in the decision variables $\Sigma_k, U_k$, but can be turned into a SDP through a convex relaxation as follows.
Define $Y_k \succeq U_k \Sigma_k^{-1} U_k^\intercal$, and relax the covariance dynamics to $\Sigma_{k+1} \succeq  A \Sigma_k A^\intercal + B U_k A^\intercal + A U_k^\intercal B^\intercal + B U_k \Sigma_k^{-1}U_k^\intercal B^\intercal$,
which leads to the relaxed problem
\begin{subequations}~\label{eq:covProblem3}
	\begin{align}
		&\min_{\Sigma_k, K_k} J_{\Sigma} = \sum_{k=0}^{N-1}\left(\mathrm{tr}(Q_k\Sigma_k) + \mathrm{tr}(R_k Y_k)\right), \label{eq:covProblem3_cost} \\
		&\textrm{such that, for all} \ k = 0,\ldots, N - 1, \nonumber \\
		&U_k \Sigma_k^{-1}U_k^\intercal - Y_k \preceq 0, \label{eq:covProblem3_ineq1} \\
		&A \Sigma_k A^\intercal + B U_k A^\intercal + A U_k^\intercal B^\intercal + B U_k \Sigma_k^{-1}U_k^\intercal B^\intercal - \Sigma_{k+1} \preceq 0, \label{eq:covProblem3_ineq2} \\
		&\Sigma_0 = \Sigma_i, \quad \Sigma_N = \Sigma_f \label{eq:covProblem3_BCs}.
	\end{align}
\end{subequations}
Using the Schur complement, constraints \eqref{eq:covProblem3_ineq1} and \eqref{eq:covProblem3_ineq2} can be written as the LMIs
\begin{subequations}~\label{eq:LMIs}
	\begin{align}
		\begin{bmatrix}
			\Sigma_k & U_k^\intercal \\
			U_k & Y_k
		\end{bmatrix} &\succeq 0, \label{eq:LMI1} \\
		\begin{bmatrix}
			\Sigma_k & U_k^\intercal B^\intercal \\
			B U_k & \Gamma_k
		\end{bmatrix} &\succeq 0, \label{eq:LMI2}
	\end{align}
\end{subequations}
where $\Gamma_k := \Sigma_{k+1} - A\Sigma_kA^\intercal - AU_k^\intercal B^\intercal - BU_kA^\intercal$.
This is equivalent to the top-level optimization problem in \eqref{eq:biLevelProblem}.

\end{document}